\def\@hideLIPIcs{1} %
\newcommand{\mc}[3]{\multicolumn{#1}{#2}{#3}}
\newcommand{\NP}{\ensuremath{\mathbf{NP}}}
\definecolor{green}{RGB}{0, 150, 130}
\definecolor{green70}{RGB}{76, 181, 167}
\definecolor{blue}{RGB}{70, 100, 170}
\definecolor{blue70}{RGB}{125,146,195}
\definecolor{blue50}{RGB}{162,177,212}                                                           
\definecolor{blue30}{RGB}{199,208,229}
\definecolor{blue15}{RGB}{227,232,242}
\definecolor{lightgray}{rgb}{0.86,0.86,0.86}
\definecolor{greengray}{RGB}{193,228,224}
\definecolor{redgray}{RGB}{233,183,183}
\definecolor{red}{RGB}{204,0,0}
\newcommand{\G}{\mathcal{G}}
\newcommand{\E}{\mathcal{E}}
\newcommand{\mS}{\mathcal{S}}
\newcommand{\al}{\beta}
\newcommand{\hpack}{\textsc{red2pack\_heuristic}}
\newcommand{\epack}{\textsc{red2pack\_b\&r}}
\newcommand{\redp}{\textsc{red2pack}}
\newcommand{\redpfull}{\textsc{elaborated}}
\newcommand{\redpdc}{\textsc{core}}
\newcommand{\pack}{\textsc{2pack}}
\newcommand{\nex}[1]{N[#1]}%
\newcommand{\n}[1]{N(#1)}%
\newcommand{\twone}[1]{N_2[#1]}%
\newcommand{\twon}[1]{N_2(#1)}%
\newcommand{\degtwo}{\textnormal{deg}^2}%
\newtheorem{@reduction}{Reduction}[section]
\newenvironment{reduction}[1][\unskip]{\begin{@reduction}[#1]\noindent\newline\noindent\hspace*{-.18cm}}{\end{@reduction}}
\newcommand{\ie}{i.e.\ }
\newcommand{\etal}{et~al.~}
\title{Scalable Algorithms for 2-Packing Sets on Arbitrary Graphs}
\author{Jannick Borowitz}{Heidelberg University, Heidelberg, Germany}{jannick.borowitz@uni-heidelberg.de}{https://orcid.org/0000-0002-8419-6324}{}
\author{Ernestine Großmann}{Heidelberg University, Heidelberg, Germany}{e.grossmann@informatik.uni-heidelberg.de}{https://orcid.org/0000-0002-9678-0253}{}
\author{Christian Schulz}{Heidelberg University, Heidelberg, Germany}{christian.schulz@informatik.uni-heidelberg.de}{https://orcid.org/0000-0002-2823-3506}{}
\author{Dominik Schweisgut}{Heidelberg University, Heidelberg, Germany}{dominik.schweisgut@stud.uni-heidelberg.de}{https://orcid.org/0009-0008-2627-505X}{}
\authorrunning{J. Borowitz \etal} 
\keywords{maximum 2-packing set, data reduction rules, algorithm engineering} %
\begin{document}

\pagenumbering{arabic}

\date{}

\maketitle

\begin{abstract}
A 2-packing set for an undirected graph $G=(V,E)$ is a subset $\mS \subset V$ such that any two
vertices $v_1,v_2 \in \mS$ have no common neighbors. Finding a 2-packing set of maximum cardinality is a
NP-hard problem. We develop a new approach to solve this problem on arbitrary graphs using its
close relation to the independent set problem. Thereby, our algorithm {\redp} uses new data reduction
rules specific to the 2-packing set problem as well as a graph transformation.
Our experiments show that we outperform the state-of-the-art for arbitrary graphs with respect to solution quality and also are able to compute solutions multiple orders of magnitude faster than previously possible. 
For example, we are able to solve \numprint{63}\% of the graphs in the tested data set to optimality in less than a second while the competitor for arbitrary graphs can only solve \numprint{5}\% of these graphs to optimality even with a 10 hour time limit. Moreover, our approach can solve a wide range of large instances that have \hbox{previously been unsolved.}
\end{abstract}

\maketitle
\vfill
\pagebreak 

\setcounter{page}{1}%

\section{Introduction}\label{sec:intro}
For a given undirected graph $G=(V,E)$ a \textit{2-packing set} is defined as a subset $\mS \subseteq V$ of all vertices such that for each pair of 
distinct vertices $v_1 \neq v_2 \in \mS$ the shortest path between $v_1$ and $v_2$ has at least length three.
A \textit{maximum 2-packing set (M2S)} is a 2-packing set with highest cardinality.
A generalization of the maximum 2-packing set problem is the maximum $k$-packing set problem, 
where the shortest path length is bounded by~$k$.
For $k=1$ this results in the \textit{maximum independent set (MIS)} problem.
An important application for the maximum 2-packing set problem is given for example in distributed algorithms. 
In contrast to the independent set problem, which given a solution vertex only conflicts the direct neighborhood, 
the 2-packing set provides information about a larger area around the vertex. This is important
for self-stabilizing algorithms 
\cite{gairing2004self,gairing2004distance,
        manne2006memory,shi2012self,turau2012efficient,
        trejo2012self,trejo2017self}.
In particular, computing large 2-packing sets can be used as a subroutine to ensure mutual exclusion of vertices
with overlapping neighborhoods. An example is finding a minimal \{$k$\}-dominating function~\cite{gairing2004self} which has various applications itself 
as presented by Gairing~\etal\cite{gairing2003exclusion}.
Bacciu~\etal\cite{bacciu2022generalizing} use large $k$-packing sets to develop a Downsampling approach for graph data. 
This is particularly useful for deep neural networks.
Further, Soto~\etal\cite{soto2018token} show that the knowledge of the size of a maximum 2-packing set in special graphs
can be used for error correcting codes and Hale~\etal\cite{hale1980frequence} indicate that large 2-packing sets
can be used to model interference issues for frequency assignment. This can be done by looking at the frequency-constrained co-channel assignment problem.
Here, the vertex set consists of locations of radio transmitters and two vertices share an edge if the frequencies are mutually perceptible. Now we want to assign a channel to as many radio transmitters as possible to conserve spectrum and to avoid interference issues vertices assigned the same channel must have a certain distance. For the distance two, this can be solved via finding a maximum 2-packing set in the corresponding graph. 
The maximum 2-packing set problem can be solved quite fast to optimality for small instances, however,
since it is a NP-hard problem, the running time of exact algorithm grows exponentially with 
the size of the graph. 
One powerful technique for tackling \NP{}-hard graph problems is to use \emph{data reduction rules}, which remove or contract local (hyper)graph structures, to reduce the input instance to an equivalent, smaller instance. Originally developed as a tool for parameterized algorithms~\cite{cygan2015parameterized}, data reduction rules have been effective in practice for computing an (unweighted) maximum independent set~\cite{chang2017computing,lamm2017finding,strash2016power} / minimum vertex cover~\cite{akiba-tcs-2016}, maximum clique~\cite{chang2020,verma2015}, and maximum $k$-plex~\cite{conte2021meta,jiang2021new}, as well as solving graph coloring~\cite{lin2017reduction,verma2015} and clique cover problems~\cite{gramm2009data,strash2022effective}, among others~\cite{Abu-Khzam2022}. To the best of our knowledge, no data reduction rules have yet been investigated for the 2-packing set problem.
The maximum 2-packing set problem can be solved with a graph transformation by expanding the 2 neighborhood and applying independent set solvers on the square graph. In recent years there was a lot of engineering yielding highly scalable solvers for maximum (weighted) independent set. However, graphs can get very dense if we directly compute the graph transformation which prohibits scalability. Thus problem specific data reductions are applied on the original problem instance first, and then we compute the expanded graph on which we solve the maximum independent set problem.

\textit{Our Results.}
Next to the new data reduction rules for the maximum 2-packing set problem, we also contribute a novel exact algorithm {\epack} as well as a heuristic algorithm {\hpack}. Both use
these reductions to solve the maximum 2-packing set problem for arbitrary graphs on large scale.
These algorithms work in three phases. First the data reductions are applied to the given graph
resulting in a reduced instance. Afterwards, the resulting graph is transformed,
such that a solution on the transformed graph for the MIS problem corresponds to a solution of
the maximum 2-packing set problem for the original graph. The third phase of the algorithms consists of solving the MIS problem on the transformed graph. In this phase our two variants differ. For {\epack} we use an exact solver, while we use a heuristic algorithm for {\hpack}.
Our experiments indicate that our algorithms outperform current state-of-the-art algorithms for arbitrary graphs both in terms of solution quality and running time. For instance, we can compute optimal solutions for 63\% of our graphs in under a second, whereas the competing method for arbitrary graphs achieves this only for 5\% of the graphs even with a 10-hour time frame. %
Lastly, our method solves many large instances that remained unsolved~before.

\section{Preliminaries}\label{sec:prelim}
Let ${G=(V, E)}$ be an undirected graph with ${n=|V|}$ and ${m=|E|}$, where ${V=\{0,...,n-1\}}$.
We extend this graph definition with further information about the 2-neighbors. Therefore, we define a graph $\G=(V,E,\E)$ with the set of vertices $V$ and edges $E$ as before. 
With the set $\E$ we include additional edges connecting vertices that share a common neighbor, but are not directly connected.
We only use the graph $\G$ when we emphasize the use of the additional edges $\E$, but all concepts introduced for the graph $G$ can be easily expanded for $\G$.
The neighborhood $\n{v}$ of a vertex $v\in V$ is defined as $\n{v} = \{u \in V : (u,v) \in E\}$ and $\nex{v} = \n{v} \cup \{v\}$.
The notation is extended to a set of vertices $U\subseteq V$ with 
$\n{U} = \cup_{u \in U}\n{u}\setminus U$ and $\nex{U} = \n{U}\cup U$.
Analogue we define the 2-neighborhood as the neighborhood via edges in $\E$ of a vertex $v\in V$ as $\twon{v} = \{u \in V : (u,v) \in \E\}$ and $\twone{v} = \twon{v} \cup \nex{v}$.
The notation is extended to a set of vertices $U\subseteq V$ with 
$\twon{U} = \cup_{u \in U}\twon{u}\setminus \nex{U}$ and $\twone{U}= \twon{U}\cup \nex{U}$.
These notations are analogue extended for $k$-neighborhoods.
 We define the subset of edges in $\E$ connecting the vertices with a common neighbor $v$ by
$\E(v)\coloneqq\{(x,y):x,y\in \twon{v} \ and \  (x,y)\notin E\}$.
The degree of a vertex $v$ is the size of its neighborhood $\deg(v) = |\n{v}|$. The 2-degree of a vertex is defined by the size of its 2-neighborhood $\degtwo(v) = |\twon{v}|$.
The \textit{square graph} $G^2=(V,E^2)$ of a graph $G=(V,E)$ is defined as a graph with the same vertex set and an extended edge set. Here, an edge $(u,v)$ is added if the distance between $u$ and $v$ is at most 2.
For ${0<k\in \mathbb{N}}$ a \textit{$k$-packing set} is defined as a subset
${\mS \subseteq V}$ of all vertices such that between each pair of vertices in $\mS$ the shortest path contains at least $k+1$ edges.
For $k=1$ we refer to the set as the \textit{independent set}, 
where all vertices in $\mS$ have to are non-adjacent.
The main focus in this work lies on the case $k=2$, 
where all vertices in $\mS$ must not have a common neighbor. 
A \textit{maximal} 2-packing set is a 2-packing set $\mS \subseteq V$ that can not be extended by any further
vertex $v \in V$ without violating the 2-packing set conditions.
The \textit{maximum} 2-packing set problem is that of finding a 2-packing set with maximum cardinality.
Analogue to the independence number $\alpha(G)$ for the maximum independent set, we define $\al(G)$ as the size of the solution to the maximum 2-packing set problem for a given graph $G$.

In a path a 2-edge contributes 2 to its length.
A 2-clique is as set of vertices in $\G$ such that they are pairwise connected by a path of at most length 2. A vertex $v$ is called \textit{2-isolated} if the vertices that are in its 2-neighborhood $\twone{v}$ form a 2-clique.

\section{Related Work} \label{sec:related_work}

\subsection{2-Packing Set Algorithms}
There are not many contributions to sequential algorithms for the 2-packing set problem
on arbitrary graphs. Trejo-Sánchez~\etal\cite{trejo2020genetic} are the first and 
as far as we know only authors to have proposed a sequential algorithm for connected arbitrary graphs. 
They developed a genetic algorithm for the M2S problem in which they use 
local improvements in each round of their algorithm and a penalty function.
Ding~\etal\cite{ding2014self} propose a self-stabilizing algorithm with safe convergence in an arbitrary graph. The algorithm consists basically of two 
operations, namely entering the solution candidate and exiting the solution candidate for each vertex in the graph. If a vertex enters the solution its neighbors 
get locked so they can not enter the solution and cause a conflict. The decision to enter or exit the solution is based on the simple criteria to check whether a
vertex causes a conflict or not. 
In general, most of the contributions to the 2-packing set problem on arbitrary graphs are in the context 
of distributed algorithms~\cite{gairing2004self, manne2006memory, shi2012self}. 
Further, there are some contributions to distributed algorithms for the M2S problem for special graphs.
Flores-Lamas~\etal\cite{flores2020distributed} present a distributed algorithm that 
finds a maximal 2-packing set in an undirected non-geometric Halin graph in linear time. They use reduction rules by Eppstein~\cite{eppstein2016halin} to determine 
a partition of the vertex set on which they base a coloring scheme to determine a maximal 2-packing set. However, the reduction of the graph is only used 
temporarily to compute the vertex partition and the coloring phase of the algorithm works on the original graph. 
Fernández-Zepeda~\etal\cite{trejo2014distributed} present a distributed algorithm for 
a M2S in geometric outerplanar graphs. 
Mjelde~\cite{mjelde2004packing} presents a self-stabilizing algorithm for the maximum $k$-packing set problem on tree graphs. Further, Mjelde~\cite{mjelde2004packing}
present a sequential algorithm for the M2S problem on tree graphs using dynamic programming.
For special graphs there are also some non-distributed algorithms for the M2S problem~\cite{soto2018token,flores2018algorithm,trejo2023approximation}. 
Trejo-Sánchez~\etal\cite{trejo2023approximation} present an approximation algorithm called \textsc{Apx-2p + Imp2p} using graph decompositions and LP-solvers. The approximation ratio is related
to how the algorithm decomposes the input graph into smaller subgraphs which is inspired by Baker~\cite{baker1994approximation}. They also mention the possibility to solve the maximum 2-packing set problem by using a graph transformation to the square graph and apply independent set solver. However, this is not used for their algorithm. This equivalence was stated by Halldórsson~\etal\cite{DBLP:conf/icalp/Telle98}.

\subsection{Independent Set Algorithms}
Since our algorithms use independent set algorithms we also summarize closely related work on this topic.  
As exact approaches to this problem mainly use data reduction techniques we summarize heuristic approaches for the unweighted as well as for the
weighted case because a lot of research happens in this area.
A widely used technique are local search algorithms. The main idea behind it is to start from an initial solution 
and then to improve this solution by insertions or swaps. The algorithm of Andrade~\etal\cite{andrade2012localsearch} proved to be a successful algorithm in this area.
The basic idea is to perform $(1,2)$-swaps, \ie to delete one vertex from the solution and add two new vertices. Hence, the solution size increases by one. 
Other local search approaches proved to be successful too~\cite{nogueira2018hybrid, cai2018cover, cai2017goodlocal, cai2015goodlocal2}. 
Further, there are also contributions that combine local search algorithms with data reduction~\cite{li2020redu} and graph neural networks~\cite{langedal2022neural}. There are also reduction based heuristics for the MIS problem. Lamm~\etal\cite{lamm2017scale} present a branch-and-reduce approach 
combined with an evolutionary algorithm. Further, 
Gu~\etal\cite{gu2021redu} use data reduction and a tie-breaking policy to apply data reductions repeatedly until they reach an empty graph.

\subsection{Data Reduction Rules}

Our algorithm uses data reductions in problem specific context as well as for solving the MIS problem on the transformed graph.
Therefore, we summarize some related work on this topic with respect to the MIS problem as well. In recent years the branch-and-reduce paradigm has been shown
to be an effective method to solve the maximum independent set problem to optimality as well as its complementary problem the minimum vertex~cover~\cite{akiba-tcs-2016}. By this paradigm we mean branching algorithms that use reduction rules to decrease the input size. Akiba and Iwata
show that this approach yields good results in comparison to other exact approaches for minimum vertex cover problem and the 
maximum independent set problem~\cite{akiba-tcs-2016}.
Further, this approach is successfully used for the maximum weighted independent set problem. Lamm~\etal\cite{lamm2019exactly} use this approach for an exact 
algorithm. Further, data reductions are also used for heuristics. Gro{\ss}mann~\etal\cite{DBLP:conf/gecco/GrossmannL0S23} use reduction rules in combination with 
an evolutionary approach for solving the maximum weighted independent set problem on huge sparse networks. Gao~\etal\cite{gao2017localredu} 
use inexact reduction rules by performing multiple rounds of a local search algorithm to determine vertices that are likely to be part of a solution to the MIS problem.

\section{The Algorithm {\redp}} \label{sec:algorithm}
\begin{wrapfigure}{T}{0.5\textwidth}
\vspace*{-0.85cm}
\begin{minipage}{0.5\textwidth}
\begin{algorithm}[H]
	\caption{High-level view of {\redp}. Structure of both \textsc{B\&R} and \textsc{Heuristic}.}\label{algo:red2pack}
	\begin{algorithmic}
		\STATE \textbf{input} graph ${\G=(V,E,\E)}$
		\STATE \textbf{output} M2S
		\STATE \textbf{procedure} {\redp}($\G$) 
		\STATE \quad $K \leftarrow$\textsc{reduce($\G$)} 
		\STATE \quad $K^2 \leftarrow$\textsc{transform($K$)} 
		\STATE \quad \textsc{MISsolve($K^2$)} 
	\end{algorithmic}
	\vspace*{-.1cm}
\end{algorithm}
\end{minipage}
\vspace*{-.3cm}
\end{wrapfigure}
We now give an overview of the components of our algorithm.
The idea of our approach is to build a square graph on which a maximum independent set is equivalent to a maximum 2-packing set on the original graph (see Theorem~\ref{thm: trafo equivalence}). 
On this transformed graph, we apply well studied maximum independent set solvers to find optimal solutions. During the transformation we increase the number of edges in the graph. Since this results in more dense graphs, this approach becomes quite slow and necessitates a substantial amount of memory.  
To alleviate this issue, we add a preprocessing step. There, we apply new problem specific data reductions exhaustively to the graph to obtain a kernel $\mathcal{K}$. We then apply the transformation on $\mathcal{K}$, resulting in a significantly smaller square graph. On this a maximum independent set solver is applied to obtain a (optimum) solution. In the end, the solution is transformed to a (optimum) solution to the input instance. Overall, this results in the algorithm {\redp} (see Algorithm~\ref{algo:red2pack}). In the following we introduce our new data reductions, then present the used graph transformation and finally give details about the maximum independent set solvers used.

\subsection{Data Reduction Rules} \label{sec:reductions}
To reduce the problem size especially for large instances exact data reductions are very useful tools.
In general our reductions allow us to identify vertices as (1) part of a solution to the maximum 2-packing set problem (included) and (2) as non-solution vertices (excluded). The reductions are applied exhaustively in a predefined order.
If a reduction is successful, we start again by applying the first reduction. 

We denote the reduced graph by $\G'$
and $\al(\G)$ is the solution size to the maximum 2-packing set problem for the given graph $\G$. For the \hbox{2-packing} set problem we first introduce the two \textit{core} data reductions in the following.

\begin{reduction}[Domination]\label{red:domination}
	Let ${v, u \in V}$ be vertices such that $\twone{v} \subseteq \twone{u}$. Then, there is some
	M2S of $\G$ that does not contain $u$. In particular, the graph is reduced
	to $\G' = \G[V\setminus \{u\}]$ and $\al(\G) = \al(\G')$.
\end{reduction}

\begin{proof}
	Let ${v, u \in V}$ and $\twone{v} \subseteq \twone{u}$.
	Further assume $\mS$ is a M2S in $\G$ containing $u$.
	Since $\twone{v}\subseteq \twone{u}$, it holds for all vertices $x\in \twone{v}\setminus\{u\}$
	that $x \notin \mS$. We define $\mS' = (\mS\setminus \{u\})\cup \{v\}$ and it holds
	$|\mS|=|\mS'|$. Further, $\mS'$ is still a valid 2-packing set since there is no vertex
	in $\twone{v}\setminus\{v\}$ that is also element of $\mS'$. By~construction $\mS'$  has the same size and therefore is a equivalent solution to M2S not containing $u$.
\end{proof}

\begin{reduction}[Clique] \label{red:clique}
	Let $v \in V$ be 2-isolated in $\G$. Then, $v$ is in some M2S of $\G$.
	Therefore, we can include $v$ and exclude $\twone{v}$.
	Resulting in $\G' = \G[V\setminus\twone{v}]$ and $\al(\G) = \al(\G')+1$.
\end{reduction}

\begin{proof}
	Let $\mS\subseteq V$ be a M2S in $\G$. Since $v$ is 2-isolated at least one vertex $w \in\twone{v}$ is
	contained in $\mS$, otherwise $\mS$ is not maximal. Assume $w\neq v$ and $\twone{v}$ forms a 2-clique, it holds ${u \notin \mS}$ for all
	${u \in \twone{v}\setminus \{w\}}$. Additionally, since $v$ is 2-isolated ${\twone{v} \subseteq \twone{w}}$. Therefore, we can always build a new solution $\mS' = (\mS\setminus \{w\})\cup \{v\}$ of the same size containing~$v$. This way we can always construct an equivalent or better solution when including $v$ and therefore the vertex $v$ is in some M2S of $\G$. Reducing the graph by including $v$ results in $\al(G) = 1+ \al(\G[V\setminus \twone{v}])$.
\end{proof}

These two reductions require knowledge about the 2-neighborhood as a prerequisite. Computing the 2-neighborhood of a vertex $v$ can become quite large and verifying these conditions is computationally expensive. Hence, we have sought out different special cases where it suffices to consider only the direct neighborhood and impose a constraint on the degree of the 2-neighborhood, which we maintain. Adding the following special cases to the \textit{core} data reductions yields our \textit{elaborated} data reductions. The following lemma helps us to show that these special cases are instances of the more general case.

\begin{lemma}\label{lem:deg}
		Let $u,v \in V$ be neighbors in $\G$ with $\nex{v}\subseteq\nex{u}$ such that $\degtwo(v) +\deg(v) \leq \deg(u)$. Then, $\twon{v} = \nex{u}\setminus\nex{v}$, \ie all 2-neighbors of the vertex $v$, are also neighbors of \hbox{the vertex $u$.}
\end{lemma}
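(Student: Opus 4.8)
The plan is to prove the claimed identity as two set inclusions and to isolate the single place where the degree inequality does any work. First I would rewrite the target set into a more convenient form: since $u$ and $v$ are neighbors we have $u \in \n{v} \subseteq \nex{v}$, so $\nex{u}\setminus\nex{v} = (\n{u}\cup\{u\})\setminus\nex{v} = \n{u}\setminus\nex{v}$. Hence it suffices to show $\twon{v} = \n{u}\setminus\nex{v}$, where throughout everything refers to the ordinary graph $G$ (so ``distance $2$'' and $\twon{\cdot}$ are taken with respect to $E$).

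The inclusion $\n{u}\setminus\nex{v}\subseteq\twon{v}$ is purely structural and uses none of the hypotheses beyond $u$ being adjacent to $v$. Given $x\in\n{u}$ with $x\notin\nex{v}$, I would observe that $x\neq v$ and $x$ is non-adjacent to $v$ (both because $x\notin\nex{v}$), that $x\neq u$ (no self-loops), and that $u$ is a common neighbor of $x$ and $v$; therefore $x$ lies at distance exactly $2$ from $v$, i.e. $x\in\twon{v}$.

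For the reverse inclusion I would switch from vertex-chasing to a cardinality squeeze, and this is exactly where $\degtwo(v)+\deg(v)\le\deg(u)$ enters. Partition $\n{u}$ as $(\n{u}\cap\nex{v})\sqcup(\n{u}\setminus\nex{v})$. From $\nex{v}\subseteq\nex{u}$ together with $u\notin\n{u}$ and $u\in\nex{v}$ one gets $\n{u}\cap\nex{v}=\nex{v}\setminus\{u\}$, a set of size $|\nex{v}|-1=\deg(v)$. By the inclusion just proved, $|\n{u}\setminus\nex{v}|\le|\twon{v}|=\degtwo(v)$. Combining, $\deg(u)=|\n{u}|=\deg(v)+|\n{u}\setminus\nex{v}|\le\deg(v)+\degtwo(v)\le\deg(u)$, so every inequality is an equality; in particular $|\n{u}\setminus\nex{v}|=|\twon{v}|$. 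Since $\n{u}\setminus\nex{v}$ is a subset of the finite set $\twon{v}$ of equal cardinality, the two sets coincide, which yields $\twon{v}=\n{u}\setminus\nex{v}=\nex{u}\setminus\nex{v}$.

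The main obstacle is conceptual rather than technical: the naive approach, trying to show directly that every $2$-neighbor of $v$ is adjacent to $u$, gets stuck, because a $2$-neighbor of $v$ reached through a neighbor $w\neq u$ of $v$ is a priori only within distance $2$ of $u$, not adjacent to it. The degree hypothesis is precisely the global counting fact that forbids this, so recasting the argument as the cardinality identity above is what makes it go through; the only routine care needed is to check the small set-membership facts ($x\neq u$, $u\in\nex{v}$) that legitimize the manipulations with $\nex{v}$ and $\nex{u}$.
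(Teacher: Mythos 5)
Your proof is correct and follows essentially the same route as the paper's: establish the inclusion $\nex{u}\setminus\nex{v}\subseteq\twon{v}$, then use the hypothesis $\degtwo(v)+\deg(v)\leq\deg(u)$ together with $\nex{v}\subseteq\nex{u}$ to force equality of cardinalities and hence of the sets. You merely spell out the inclusion and the counting in more detail (working with $\n{u}$ instead of $\nex{u}$), which the paper leaves implicit.
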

\begin{proof}
		Let $u,v \in V$ be neighbors with $\nex{v}\subseteq\nex{u}$ such that $\degtwo(v)+\deg(v) \leq \deg(u)$. By definition of the 2-neighborhood we know that $\nex{u}\setminus\nex{v}\subseteq \twon{v}$. Therefore, we know that $\deg(u) + 1 -(\deg(v) + 1) \leq \degtwo(v)$ which is equivalent to $\deg(u) \leq \degtwo(v) +\deg(v)$. Because of the assumption $\degtwo(v) +\deg(v) \leq \deg(u)$, there has to hold an equality and therefore the sets have to be the same $\twon{v} = \nex{u}\setminus\nex{v}$.
\end{proof}

\begin{reduction}[Degree Zero Reduction] \label{red:deg0}
	Let $v \in V$ be a degree zero vertex. Furthermore, let $\degtwo(v) \leq 1$.
	Then, $v$ is in some M2S of~$\G$. Therefore, vertex $v$ can be included and the 2-neighborhood of $v$ excluded. Resulting in
	$\G'=\G[V\setminus\twone{v}]$ and \hbox{$\al(\G) = \al(\G')+1$.}
\end{reduction}

\begin{proof}
	Let ${v \in V}$ be a vertex with ${\deg(v)=0}$ and $\degtwo(v) \leq 1$. For the case of $\degtwo(v) =0$ there is no conflict and $v$ can be included in the solution. In the case of $\degtwo(v) =1$, let the 2-neighbor of $v$ be $u\in\twon{v}$. Assume $u$ is part of the solution $\mS$. Then, we can create a new solution by $\mS' = \mS\setminus \{u\} \cup \{v\}$ of same size. This way we always find a M2S including~$v$.
\end{proof}

\begin{figure}[t!]
    \centering
    \caption{Original graph on the left reduced to the graph on the right. Due to clarity reasons, we only display the necessary edges from $\E$ (dashed) regarding the 2-neighborhood information.}
    \subcaptionbox{ Domination Reduction: 
        $\twone{u}$ is colored with orange, $\twone{v}$ with blue.
        The vertex $u$ dominates $v$ and therefore $u$ can be excluded.
        \label{fig:dom}}[.48\textwidth]{
        \begin{tikzpicture}[scale=0.35]

\begin{scope}[every node/.style={circle, thick,draw=black}]
\node[draw=white,label={[color=blue,xshift=0.0cm,yshift=0cm]{$v$}}] (v0) at (3,7.5) {};
\node[draw=white,label={[color=orange,xshift=0.0cm,yshift=0cm]{$u$}}] (u0) at (7,7.5) {};
\node[] (v) at (3,7.5) {};
\node[] (u) at (7,7.5) {};
\node[] (v4) at (8.5,5.5) {};
\node[] (v5) at (4.5,5.5) {};
\node[] (v6) at (6.5,5.5) {};
\node[] (v7) at (1.5,5.5) {};
\node[] (v1) at (3,3.5) {};
\node[fill = orange] (v2) at (5,3.5) {};
\node[fill = orange] (v3) at (7,3.5) {};
\node[white] (G0) at (2,2) {};
\node[white] (G1) at (3,1.5) {};
\node[white] (G2) at (4,2) {};
\node[white] (G4) at (6,2) {};
\node[white] (G3) at (7,2) {};
\node[white] (G5) at (8,2) {};
\end{scope}

\begin{scope}[every node/.style={circle, thick,draw=black, fill=black}]
\node[draw=white,fill=white,label={[xshift=0.0cm,yshift=0cm]{$v$}}] (vk0) at (13,7.5) {};
\node[] (vk) at (13,7.5) {};
\node[draw=white,color=white,label={[color=redgray,xshift=0.0cm,yshift=0cm]{$u$}}] (uk0) at (17,7.5) {};
\node[color = redgray] (uk) at (17,7.5) {};
\node[] (vk1) at (13,3.5) {};
\node[] (vk2) at (15,3.5) {};
\node[] (vk3) at (17,3.5) {};
\node[] (vk4) at (18.5,5.5) {};
\node[] (vk5) at (14.5,5.5) {};
\node[] (vk6) at (16.5,5.5) {};
\node[] (vk7) at (11.5,5.5) {};
\node[white] (GI0) at (12,2) {};
\node[white] (GI1) at (13,1.5) {};
\node[white] (GI2) at (14,2) {};
\node[white] (GI4) at (16,2) {};
\node[white] (GI3) at (17,2) {};
\node[white] (GI5) at (18,2) {};

\end{scope}

\begin{scope}[>={Stealth[black]},
every edge/.style={draw=black,thick, color=black}]
\path [-] (v4) edge node {} (v3);
\path [-] (v5) edge node {} (v1);
\path [-] (v7) edge node {} (v1);
\path [-] (v6) edge node {} (v2);
\path [-] (u) edge node {} (v5);
\path [-] (u) edge node {} (v6);
\path [-] (u) edge node {} (v4);
\path [-] (v) edge node {} (v7);
\path [-] (v) edge node {} (u);

\path [-] (G1) edge node {} (v1);
\path [-] (G2) edge node {} (v2);
\path [-] (G4) edge node {} (v2);
\path [-] (G3) edge node {} (v3);

\path [-] (vk4) edge node {} (vk3);
\path [-] (vk5) edge node {} (vk1);
\path [-] (vk7) edge node {} (vk1);
\path [-] (vk6) edge node {} (vk2);
\path [-] (vk) edge node {}  (vk7);

\draw[lightgray,-,thick] (uk) --(vk4);
\draw[lightgray,-,thick] (uk) --(vk5);
\draw[lightgray,-,thick] (uk) --(vk6);
\draw[lightgray,-,thick] (uk) --(vk7);
\draw[lightgray,-,thick] (uk) --(vk);
\draw[lightgray,-,thick] (vk) -- (uk);

\end{scope}

\begin{scope}[thick, gray, dashed]
\draw[] (v) --(v4);
\draw[] (v) --(v5);
\draw[] (v) --(v6);

\draw[] (v5) --(v6);
\draw[] (v4) --(v5);
\draw[] (v5) arc (220:320:2.5);

\draw[] (vk) --(vk4);
\draw[] (vk) --(vk5);
\draw[] (vk) --(vk6);

\draw[] (vk4) --(vk6);
\draw[] (vk4) --(vk5);
\draw[] (vk5) arc (220:320:2.5);
\end{scope}

\begin{scope}[every node/.style={circle, thick,draw=black, fill=black}]
\node[] (vkk4) at (18.5,5.5) {};
\node[] (vkv5) at (14.5,5.5) {};
\node[] (vkk6) at (16.5,5.5) {};
\end{scope}
\begin{scope}
\path [-] (GI1) edge node {} (vk1);
\path [-] (GI2) edge node {} (vk2);
\path [-] (GI4) edge node {} (vk2);
\path [-] (GI3) edge node {} (vk3);

\fill[white] (1.5,2.5) -- (8.5,2.5) -- (8.5,1.5) -- (1.5,1.5) -- (1.5, 2.5);
\draw[black, thick] (1.5,2.5) -- (8.5,2.5);
\draw[black, thick] (8.5,1) -- (1.5,1);
\draw[black, thick] (1.5,2.5) arc (90:270:0.75);
\draw[black, thick] (8.5,1) arc (270:450:0.75);

\fill[white] (11.5,2.5) -- (18.5,2.5) -- (18.5,1.5) -- (11.5,1.5) -- (11.5, 2.5);
\draw[white] (G1) -- (G3)  node[midway, black] {$G[V\setminus \twone{\{v,u\}}]$} ;
\draw[white] (GI1) -- (GI3)  node[midway, black] {$G[V\setminus \twone{\{v,u\}}]$} ;

\draw[black, thick] (11.5,2.5) -- (18.5,2.5);
\draw[black, thick] (18.5,1) -- (11.5,1);
\draw[black, thick] (11.5,2.5) arc (90:270:0.75);
\draw[black, thick] (18.5,1) arc (270:450:0.75);

\draw[-Stealth, thick] (9.5,5.5) -- (10.5,5.5); 

\fill[orange] (3,7) arc   (270:450:0.5);
\fill[orange] (7,7) arc   (270:450:0.5);
\fill[orange] (3,3) arc   (270:450:0.5);
\fill[orange] (6.5,5) arc   (270:450:0.5);
\fill[orange] (4.5,5) arc   (270:450:0.5);
\fill[orange] (1.5,5) arc (270:450:0.5);
\fill[orange] (8.5,5) arc (270:450:0.5);

\fill[blue] (3,8)   arc (90:270:0.5);
\fill[blue] (7,8)   arc (90:270:0.5);
\fill[blue] (3,4)   arc (90:270:0.5);
\fill[blue] (6.5,6)   arc (90:270:0.5);
\fill[blue] (4.5,6)   arc (90:270:0.5);
\fill[blue] (1.5,6) arc (90:270:0.5);
\fill[blue] (8.5,6) arc (90:270:0.5);

\begin{scope}[every node/.style={circle, thick,draw=black}]
	\node[] (v) at (3,7.5) {};
	\node[] (u) at (7,7.5) {};
	\node[] (v4) at (8.5,5.5) {};
	\node[] (v5) at (4.5,5.5) {};
	\node[] (v6) at (6.5,5.5) {};
	\node[] (v7) at (1.5,5.5) {};
	\node[] (v1) at (3,3.5) {};
\end{scope}

\end{scope}
\end{tikzpicture}}\hfill
    \subcaptionbox{Twin Reduction: 
        The neighbors $u,w \in V$ of $v$ are twins. We can include $v$ into the solution and exclude its 2-neighborhood $\twone{v}$.
        \label{fig:twin}}[.48\textwidth]{
        \begin{tikzpicture}[scale=0.35]
\draw[-Stealth, thick] (9.5,3.5) -- (10.5,3.5);

\begin{scope}[every node/.style={circle,scale=1.,thick,draw=black, fill=black}]
\node[white,label={[xshift=0.0cm,yshift=0cm]$v$}] (v0) at (5,5.5) {};
\node[white,label={[xshift=0.0cm,yshift=0cm]$w$}] (v40) at (8.5,5.5) {};
\node[white,label={[xshift=0.0cm,yshift=0cm]$u$}] (v50) at (1.5,5.5) {};
\node (v) at (5,5.5) {};
\node (v4) at (8.2,5.5) {};
\node (v5) at (1.8,5.5) {};
\node (v1) at (3,3.5) {};
\node (v2) at (5,3.5) {};
\node (v3) at (7,3.5) {};
\node (g1) at (3,2.) {};
\node (g2) at (4,1.2) {};
\node (g3) at (6,1.2) {};
\node (g4) at (7,2.) {};
\node[white] (G1) at (3,-0.5) {};
\node[white] (G2) at (5,0) {};
\node[white] (G3) at (5,0) {};
\node[white] (G4) at (7,0) {};

\node[scale=1.5,white,label={[greengray,xshift=0.0cm,yshift=0cm]$v$}] (x) at (15,5.5) {};
\node[scale=1.5,white,label={[redgray,xshift=0.0cm,yshift=0cm]$w$}] (x4) at (18.5,5.5) {};
\node[scale=1.5,white,label={[redgray,xshift=0.0cm,yshift=0cm]$u$}] (x5) at (11.5,5.5) {};
\node[greengray] (x) at (15,5.5) {};
\node[redgray] (x4) at (18.2,5.5) {};
\node[redgray] (x5) at (11.8,5.5) {};
\node[redgray, draw=redgray] (x1) at (13,3.5) {};
\node[redgray, draw=redgray] (x2) at (15,3.5) {};
\node[redgray, draw=redgray] (x3) at (17,3.5) {};
\node (X1) at (13,2.) {};
\node (X2) at (14,1.2) {};
\node (X3) at (16,1.2) {};
\node (X4) at (17,2.) {};
\node[white] (XX1) at (13,-0.5) {};
\node[white] (XX2) at (15,0) {};
\node[white] (XX3) at (15,0) {};
\node[white] (XX4) at (17,0) {};

\end{scope}

\begin{scope}[>={Stealth[black]}, thick,
every edge/.style={draw=black,thick, color=black}]
\path [-] (v1) edge node {} (g1);
\path [-] (v2) edge node {} (g2);
\path [-] (v2) edge node {} (g3);
\path [-] (v3) edge node {} (g4);
\path [-] (v4) edge node {} (v1);
\path [-] (v4) edge node {} (v2);
\path [-] (v4) edge node {} (v3);
\path [-] (v5) edge node {} (v1);
\path [-] (v5) edge node {} (v2);
\path [-] (v5) edge node {} (v3);
\path [-] (v) edge node {} (v5);
\path [-] (v) edge node {} (v4);

\path [-] (G1) edge node {} (g1);
\path [-] (G3) edge node {} (g2);
\path [-] (G2) edge node {} (g3);
\path [-] (G4) edge node {} (g4);

\draw[lightgray,-] (x1) -- (X1);
\draw[lightgray,-] (x2) -- (X2);
\draw[lightgray,-] (x2) -- (X3);
\draw[lightgray,-] (x3) -- (X4);
\draw[lightgray,-] (x4)-- (x1);
\draw[lightgray,-] (x4)-- (x2);
\draw[lightgray,-] (x4)-- (x3);
\draw[lightgray,-] (x5)-- (x1);
\draw[lightgray,-] (x5)-- (x2);
\draw[lightgray,-] (x5)-- (x3);
\draw[lightgray,-] (x) --(x5);
\draw[lightgray,-] (x) --(x4);

\path [-] (XX1) edge node {} (X1);
\path [-] (XX3) edge node {} (X2);
\path [-] (XX2) edge node {} (X3);
\path [-] (XX4) edge node {} (X4);

\draw[thick,gray,dashed] (g3)--(g2);

\draw[thick,gray,dashed] (X2)--(X3);

\fill[white] (12.5,0.5) -- (17.5,0.5) -- (17.5,-.5) -- (12.5,-.5) -- (12.5, 0.5);
\fill[white] (2.5,0.5) -- (7.5,0.5) -- (7.5,-.5) -- (2.5,-.5) -- (2.5, 0.5);

\draw[black, thick] (2,0.5) -- (8,0.5);
\draw[black, thick] (8,-1) -- (2,-1);
\draw[black, thick] (12,0.5) -- (18,0.5);
\draw[black, thick] (18,-1) -- (12,-1);
\draw[black, thick] (2,0.5) arc (90:270:0.75);
\draw[black, thick] (8,-1) arc (270:450:0.75);
\draw[black, thick] (12,0.5) arc (90:270:0.75);
\draw[black, thick] (18,-1) arc (270:450:0.75);

\draw[white] (G1) -- (G4)  node[scale=1,midway, black] {$G[V\setminus \twone{v}]$} ;
\draw[white] (XX1) -- (XX4)  node[scale=1,midway, black] {$G[V\setminus \twone{v}]$} ;

\end{scope}
\end{tikzpicture}}
\end{figure}

\begin{reduction}[Degree Zero Triangle] \label{red:deg0triangle}
	Let $v \in V$ be a degree zero vertex. Furthermore, let $\degtwo(v) = 2$ with 2-neighbors $\twon{v}=\{u,w\}$ that are also adjacent via an edge or 2-edge $u\in \twone{w}$.
	Then, $v$ is in some M2S of $\G$. Therefore, vertex $v$ can be included and the 2-neighborhood of $v$ excluded. This results in
	$\G'=\G[V\setminus\twone{v}]$ and $\al(\G) = \al(\G')+1$.
\end{reduction}

\begin{proof}
	Let ${v \in V}$ be a vertex of ${\deg(v)=0}$ and $\{u,w\}=\twon{v}$ its 2-neighbors adjacent via an edge or 2-edge $u\in \twone{w}$. Vertices $u$ and $w$ dominate vertex $v$ and can therefore be excluded by Reduction~\ref{red:domination}. Now Reduction~\ref{red:deg0} is applicable and vertex~$v$ can be included into the solution.
\end{proof}

\begin{reduction}[Degree One] \label{red:deg1}
    Let $v \in V$ be a degree one vertex with ${\n{v} = \{u\}}$. Furthermore, let $\degtwo(v) \leq \deg(u)-1$.
    Then, $v$ is in some M2S of $\G$. Therefore, vertex $v$ can be included and the 2-neighborhood of $v$ excluded. This results in
    $\G'=\G[V\setminus\twone{v}]$ and $\al(\G) = \al(\G')+1$.
\end{reduction}

\begin{proof}
    Let ${v \in V}$ be a vertex with ${\deg(v)=1}$ and ${\n{v} = \{u\}}$ with $\degtwo(v) \leq \deg(u)-1$. We can apply Lemma~\ref{lem:deg} and it holds that $\twone{v} = \nex{u}$. 
    Therefore, this represents a special case of the 2-clique and Reduction~\ref{red:clique} 
    can be applied.
\end{proof}

\begin{reduction}[Degree Two V-Shape]\label{red:v-shape}
	Let ${v \in V}$ be a vertex of ${\deg(v)=2}$ with $N(v) = \{u,w\}$ and $\degtwo(v)=0$.
	Then, $v$ is in some M2S of $\G$.
	The graph is reduced to $\G' = \G[V\setminus\twone{v}]$ and $\al(\G) =\al(\G') + 1$.
\end{reduction}

\begin{proof}
	Let the above stated assumptions hold. Then, $\twone{v} \subseteq \twone{u}$ and $\twone{v} \subseteq \twone{w}$ the vertices $w$ and $u$ can be excluded by Reduction~\ref{red:domination}. Since $|\twon{v}|=0$, vertex~$v$ can \hbox{safely be included.}
\end{proof}

\begin{reduction}[Degree Two Triangle]\label{red:triangle}
	Let ${v \in V}$ be a vertex of ${\deg(v)=2}$ with $N(v) = \{u,w\}$ and $2=\deg(u)=\deg(w)$. Furthermore, let $\degtwo(v)=0$.
	Then $v$ is in some M2S of $\G$.
	The graph is reduced to $\G' = \G[V\setminus\twone{v}]$ and $\al(\G) =\al(\G') + 1$.
\end{reduction}

\begin{proof}
	Let the vertices $v,u,w \in V$ all have degree two and $\n{v} = \{u,w\}$ and $\degtwo(v)=0$. In this case, the vertices $u$, $v$ and $w$ form a triangle. Since $\twone{v} \subseteq \twone{u}$ and $\twone{v} \subseteq \twone{w}$ the vertices $w$ and $u$ can be excluded by Reduction~\ref{red:domination}. Since $|\twon{v}|=0$, vertex~$v$ can \hbox{safely be included.}
\end{proof}

\begin{reduction}[Degree Two 4-Cycle]\label{red:cycle}
	Let ${u,v,w,x \in V}$ be vertices with $\deg(v)=\deg(u)=\deg(w)=2$, $\n{v} = \{u,w\}$ and $\twon{v} =\{x\}$. Furthermore, let $x\in\n{u}$ and $x\in\n{w}$. Then, the vertices build a 4-cycle and $v$ is in some M2S of $\G$.
	The graph is reduced to $\G' = \G[V\setminus\twone{v}]$ and $\al(\G) =\al(\G') + 1$.
\end{reduction}

\begin{proof}
	Assuming the above stated assumptions such that the vertices $u$, $v$, $w$ and the one 2-neighbor $x\in\twon{v}$ form a 4-cycle. It holds that $\twone{v}=\{u,v,w,x\}\subseteq\twone{u}$, therefore, we can exclude the vertex $u$ by Reduction~\ref{red:domination}. Analogue, we can exclude vertex $w$.  
	Assume $x$ is part of a M2S $\mS$. Then, we can create a new solution $\mS' = \mS\setminus \{x\} \cup \{v\}$ of same size. This way we always find a M2S including~$v$.
\end{proof}

\begin{reduction}[Fast Domination]\label{red:fastDomination}
	Let ${v, u \in V}$ be vertices such that $\nex{v} \subseteq \nex{u}$ and $\degtwo(v) + \deg(v) \leq \deg(u)$. Then, there is some M2S of $\G$ without $u$. The graph is reduced
	to $\G' = \G[V\setminus \{u\}]$ and $\al(\G) = \al(\G')$.
\end{reduction}

\begin{proof}	
	Let ${v, u \in V}$ and $\nex{v} \subseteq \nex{u}$ with $\degtwo(v) + \deg(v) \leq \deg(u)$. Using Lemma~\ref{lem:deg} we know $\twon{v}=\nex{u}\setminus\nex{v}$. It holds that $\twone{v} \subseteq \twone{u}$ and \hbox{Reduction~\ref{red:domination} is applicable.}
\end{proof}

\begin{reduction}[Twin] \label{red:twin}
	Let ${v \in V}$ be a vertex of degree ${\deg(v)=2}$ and 
	${u, w \in V}$ be its neighbors with $N(u) = N(w)$. Furthermore, let $\degtwo(v) \leq \deg(u)-1$.
	Then, $u$ and $w$ are twins and $v$ is in some M2S of $\G$. We reduce $\G'=\G[V\setminus\twone{v}]$
	and $\al(\G) = \al(\G')+1$.
\end{reduction}
\begin{proof}
	Let ${v \in V}$ be a vertex with ${\deg(v)=2}$ and 
	${u, w \in V}$ be its neighbors with $N(u) = N(w)$. The additional assumption $\degtwo(v) \leq \deg(u)-1 = \deg(w)-1$ ensures that all 2-neighbors of $v$ are connected to the vertices $u$ and $w$. Analogue to Lemma~\ref{lem:deg} but for the neighborhood of $u$ and $v$ not a subset.
	This way $\G[\twone{v}]$ forms a 2-clique, and therefore, this reduction is a special \hbox{case of
	Reduction~\ref{red:clique}.}
\end{proof}

\subsubsection{Miscellaneous} \label{sec:algoReductions}

We now motivate the importance of the edge set $\E$. 
When a vertex $v$ is included by a reduction, its 2-neighborhood has to be excluded to obtain a valid maximum 2-packing set.
In Figure~\ref{fig:toBeExcludedWrong} we perform Reduction~\ref{red:deg1} on the original graph $G$ without the separate 2-neighbor edges $\E$. Here, we do not get a valid solution since the information, that $n_1,\ n_2$ and $n_3$ are in the same 2-neighborhood is lost.
When we apply the reduction on $\G$ we do not lose connecting edges in $\E(w)$ of an excluded vertex $w$.
This is illustrated in Figure~\ref{fig:toBeExcludedRight}.

\begin{figure*}
    \centering
    \caption{
    	Demonstration of the vital role of retaining 2-neighborhood information for non-reduced vertices in maintaining a valid solution to the M2S problem. Reduced vertices and edges are light gray shaded. Green vertices are included, red vertices are excluded from the solution.    
    }\label{fig:toBeExcluded}
    \subcaptionbox{Degree One Reduction without $\E$ loosing the 2-neighborhood information yielding an invalid solution.
        \label{fig:toBeExcludedWrong}}[\textwidth]{\begin{tikzpicture}[scale=0.45]
\draw[white] (7.5,5) -- (10.5,5) node[midway,black] {reduce} ;
\draw[white] (19,5) -- (22,5) node[midway,black] {solve} ;
\draw[-Stealth, thick] (7.5,4.5) -- (10.5,4.5);
\draw[-Stealth, thick] (19,4.5) -- (22,4.5);

\newcommand{\drawmygraph}[6]{
    \begin{scope}[shift={#2}, every node/.style={circle,scale=.8,thick,draw=black, fill=black}, every edge/.style={ultra thick}, scale=.8]
	\node[color=#4, label={[scale=1, color=#4,xshift=-0.1cm,yshift=-0.15cm]$v$}] (inv#1) at (2.,4.5) {};
	\node[color=#5, label={[scale=1, color=#5,xshift=-0.1cm,yshift=-0.15cm]$u$}] (exv#1) at (3.5,4.5) {};
	\node[color=#5, label={[scale=1, color=#5,xshift=-0.1cm,yshift=-0.15cm]$w$}] (vv#1) at (5,4.5) {};
	\node[label={[scale=1, color=#6,yshift=-.5cm, xshift=.5cm]$n_1$},#6] (n1v#1) at (6.5,5.6) {};
	\node[label={[scale=1, color=#6,yshift=-.6cm, xshift=.5cm]$n_2$},#6] (n2v#1) at (7.5,4.5) {};
	\node[label={[scale=1, color=#6,yshift=-.6cm, xshift=.5cm]$n_3$},#6] (n3v#1) at (6.5,3.4) {};
	\draw[-,#3, thick] (inv#1) -- (exv#1);
	\draw[-,#3, thick] (vv#1) -- (exv#1);
	\draw[-,#3, thick] (vv#1) -- (n1v#1);
	\draw[-,#3, thick] (vv#1) -- (n2v#1);
	\draw[-,#3, thick] (vv#1) -- (n3v#1);
	\end{scope}
}
\drawmygraph{1}{(-1,1)}{black}{black}{black}{black};
\drawmygraph{2}{(10.5,1)}{lightgray}{greengray}{redgray}{black};
\drawmygraph{3}{(22,1)}{lightgray}{greengray}{redgray}{green};

\end{tikzpicture}}
    \subcaptionbox{Degree One Reduction with necessary 2-neighborhood information via $\E$ resulting in \hbox{a valid M2S.}
        \label{fig:toBeExcludedRight}}[\textwidth]{
        \begin{tikzpicture}[scale=0.45]
\pgfdeclarelayer{background}
\pgfdeclarelayer{main}
\pgfsetlayers{background,main}
	
\draw[white] (7.5,5) -- (10.5,5) node[midway,black] {reduce} ;
\draw[white] (19,5) -- (22,5) node[midway,black] {solve} ;
\draw[-Stealth, thick] (7.5,4.5) -- (10.5,4.5);
\draw[-Stealth, thick] (19,4.5) -- (22,4.5);

\newcommand{\drawmygraph}[7]{
	\begin{scope}[shift={#2}, every node/.style={circle,scale=.8,thick,draw=black, fill=black}, scale=.8]
		\node[color=#4, label={[scale=1, color=#4,xshift=-0.1cm,yshift=-0.15cm]$v$}] (inv#1) at (2.,4.5) {};
		\node[color=#5, label={[scale=1, color=#5,xshift=-0.1cm,yshift=-0.15cm]$u$}] (exv#1) at (3.5,4.5) {};
		\node[color=#5, label={[scale=1, color=#5,xshift=-0.1cm,yshift=-0.15cm]$w$}] (vv#1) at (5,4.5) {};
		\node[label={[scale=1, color=#6,yshift=-.5cm, xshift=.5cm]$n_1$},#6] (n1v#1) at (6.5,5.6) {};
		\node[label={[scale=1, color=black,yshift=-.6cm, xshift=.5cm]$n_2$},black] (n2v#1) at (7.5,4.5) {};
		\node[label={[scale=1, color=black,yshift=-.6cm, xshift=.5cm]$n_3$},black] (n3v#1) at (6.5,3.4) {};
		\begin{pgfonlayer}{background}
		\draw[-,#3,  thick] (inv#1) -- (exv#1);
		\draw[-,#3,  thick] (vv#1) -- (exv#1);
		\draw[-,#3,  thick] (vv#1) -- (n1v#1);
		\draw[-,#3,  thick] (vv#1) -- (n2v#1);
		\draw[-,#3,  thick] (vv#1) -- (n3v#1);
		\draw[gray, dashed,  thick] (n1v#1) -- (n2v#1);
		\draw[gray, dashed,  thick] (n1v#1) -- (n3v#1);
		\draw[gray, dashed,  thick] (n2v#1) -- (n3v#1);
		\draw[#7, dashed,  thick] (n1v#1) -- (exv#1);
		\draw[#7, dashed,  thick] (n3v#1) -- (exv#1);
		\draw[#7, dashed,  thick] (inv#1) arc [start angle=180, end angle= 360, radius=1.5cm];
		\draw[#7, dashed,  thick] (exv#1) arc [start angle=180, end angle=0, radius=2cm];
		\end{pgfonlayer}
	\end{scope}
}
\drawmygraph{1}{(-1,1)}{black}{black}{black}{black}{gray};
\drawmygraph{2}{(10.5,1)}{lightgray}{greengray}{redgray}{black}{lightgray};
\drawmygraph{3}{(22,1)}{lightgray}{greengray}{redgray}{green}{lightgray};

\end{tikzpicture}}
\end{figure*}

\subsubsection{Data Structure} \label{sec:data structure}
Our reductions operate on a dynamic graph data structure based on adjacency arrays representing undirected edges by two directed edges.
Internally, we separate edges and 2-edges with two adjacency arrays.
A key observation of our reductions is that we often do not need to know the 2-neighborhood to exclude a vertex. This can be seen for example in the Degree One Reduction applied to a vertex $v$ with its neighbor $u$. Then, we do not need to consider $\twon{u}$.
In general, we only compute and then store the 2-neighborhood of a vertex $v$ on demand.

This also leads to less effort, compared to initially constructing the 2-neighborhood for the whole graph, when reducing the graph.
When a vertex $u$ is removed from the graph, we delete every edge and 2-edge pointing to it.
Removing all incoming edges can take $\mathcal{O}(\Delta^4)$ time where $\Delta$ is the highest degree in the graph. With the on-demand technique we can also reduce the amount of computations there, \ie
if the 2-neighborhood of a vertex $v$ is not computed in advance there are potentially less edges that are deleted in this step.

\subsection{Graph Transformation}\label{sec:trafo}
\begin{wrapfigure}{}{0.5\textwidth}
	\centering
		\vspace*{-.5cm}
	\caption{The graph ${\G=(V,E,\E)}$ on the left 
		with edges in $\E$ marked as dashed lines. This graph is transformed to ${G^2=(V,E^2)}$ on the right. The green vertices present the M2S (left) and \hbox{the MIS (right).}}\label{fig:graphTrafo}
	\begin{tikzpicture}[scale=.4]
\begin{scope}[every node/.style={circle,scale=0.7,thick,fill=black}]
\node[fill=green] (v1) at (3,1) {};
\node[fill=green] (v4) at (8,6) {};
\node[fill=green] (v6) at (5,6) {};
\node (v5) at (5,4) {};
\node (v7) at (2.5,3) {};
\node (v2) at (5,2) {};
\node (v3) at (7,3) {};

\node[fill=green] (g1) at (12,1) {};
\node[fill=green] (g4) at (17,6) {};
\node[fill=green] (g6) at (14,6) {};
\node (g5) at (14,4) {};
\node (g2) at (14,2) {};
\node (g3) at (16,3) {};
\node (g7) at (11.5,3) {};
\end{scope}

\begin{scope}[>={Stealth[black]},
every edge/.style={draw=black,thick, color=black}]
\path [-] (v2) edge node {} (v1);
\path [-] (v5) edge node {} (v2);
\path [-] (v5) edge node {} (v3);
\path [-] (v5) edge node {} (v4);
\path [-] (v4) edge node {} (v3);
\path [-] (v7) edge node {} (v5);
\path [-] (v7) edge node {} (v6);

\path [-] (g2) edge node {} (g1);
\path [-] (g5) edge node {} (g2);
\path [-] (g5) edge node {} (g3);
\path [-] (g5) edge node {} (g4);
\path [-] (g4) edge node {} (g3);
\path [-] (g7) edge node {} (g5);
\path [-] (g7) edge node {} (g6);

\path [-] (g5) edge node {} (g6);
\path [-] (g5) edge node {} (g1);
\path [-] (g2) edge node {} (g3);
\path [-] (g7) edge node {} (g2);
\path [-] (g2) edge node {} (g4);

\draw[-Stealth, thick] (8,3) -- (10.5,3);
\end{scope}

\begin{scope}[edge/.style={draw=black,thick, color=black}, dashed]
\path [dashed] (v5) edge node {} (v6);
\path [dashed] (v5) edge node {} (v1);
\path [dashed] (v2) edge node {} (v3);
\path [dashed] (v7) edge node {} (v2);
\path [dashed] (v2) edge node {} (v4);
\end{scope}

\end{tikzpicture}
			\vspace*{-.3cm}
\end{wrapfigure}
	After we applied all data reduction rules exhaustively, i.e.~there is no data reduction rule that can still be applied, we begin the transformation of the reduced graph to the square~graph. 
	
	For this process we start with a graph ${G=(V,E)}$ and add all edges in $\E$ as regular edges. The resulting set of edges is referred to as $E^2 = E\cup \E$, and the transformed square graph is denoted as ${G^2=(V,E^2)}$.
	This way the set of regular edges is extended to also contain edges connecting pairs of vertices $u,w \in V$ that have a common neighbor. Since $E \cap \E = \emptyset$ we do not add parallel edges.
	Figure~\ref{fig:graphTrafo} illustrates of the square graph. 

\begin{theorem}\label{thm: trafo equivalence}
	Let ${G=(V,E)}$ be a given graph and ${G^2=(V,E^2)}$ the transformed 
	graph.
	Then, an optimal solution to the maximum independent set problem on $G^2$ 
	is an optimal solution for the maximum 2-packing set problem on the original graph $G$ \cite{DBLP:conf/icalp/Telle98}.
\end{theorem}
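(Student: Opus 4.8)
The plan is to establish a bijection-free equivalence by showing that the two combinatorial constraints coincide on the common vertex set. The key observation is that the constraint defining a 2-packing set on $G$ and the constraint defining an independent set on $G^2$ are, vertex-pair by vertex-pair, the same constraint. Concretely, I would first unfold the definition of $G^2$: its edge set is $E^2 = E \cup \E$, where $E$ consists of the original edges and $\E$ consists of all pairs $\{u,w\}$ that are non-adjacent in $G$ but share a common neighbor. Hence for any two distinct vertices $u,w \in V$, we have $(u,w) \in E^2$ if and only if the distance between $u$ and $w$ in $G$ is exactly $1$ or exactly $2$, i.e.\ if and only if the shortest path between them has length at most $2$.

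The main step is then the following equivalence, proved for an arbitrary vertex subset $\mS \subseteq V$: $\mS$ is a 2-packing set of $G$ $\iff$ $\mS$ is an independent set of $G^2$. For the forward direction, suppose $\mS$ is a 2-packing set of $G$ and take distinct $u,w \in \mS$. By definition the shortest path between $u$ and $w$ has length at least $3$, so in particular they are not adjacent in $G$ and do not share a common neighbor; therefore $(u,w) \notin E$ and $(u,w) \notin \E$, so $(u,w) \notin E^2$. Since this holds for every pair in $\mS$, the set $\mS$ is independent in $G^2$. The reverse direction is the contrapositive of the same computation: if $\mS$ is not a 2-packing set of $G$, there are distinct $u,w \in \mS$ at distance at most $2$, which by the characterization of $E^2$ above means $(u,w) \in E^2$, so $\mS$ is not independent in $G^2$.

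Once this set-level equivalence is in hand, the statement about optimal solutions is immediate: the family of feasible solutions to the maximum 2-packing set problem on $G$ is \emph{literally the same family} as the family of feasible solutions to the maximum independent set problem on $G^2$, and both objectives are cardinality. Therefore a subset $\mS$ maximizing $|\mS|$ among 2-packing sets of $G$ is precisely a subset maximizing $|\mS|$ among independent sets of $G^2$, and in particular $\al(G) = \alpha(G^2)$. I would close by noting this recovers the equivalence of Halldórsson~\etal\cite{DBLP:conf/icalp/Telle98}.

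I do not expect any serious obstacle here; the only point requiring a little care is making sure the distance-at-most-$2$ characterization of $E^2$ is stated cleanly, in particular that $\E$ captures \emph{exactly} the distance-$2$ (non-adjacent) pairs and that $E \cap \E = \emptyset$ so no parallel edges blur the picture — both facts are already recorded in the paragraph preceding the theorem. A secondary minor point is to phrase the argument at the level of feasible-solution families rather than constructing an explicit map, which keeps the proof short and makes the "optimal $\leftrightarrow$ optimal" claim transparent.
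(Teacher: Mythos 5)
Your proof is correct. Note, however, that the paper itself gives no proof of this theorem: it simply attributes the equivalence to Halld\'orsson~et~al.~\cite{DBLP:conf/icalp/Telle98}, so there is no in-paper argument to compare yours against. Your elementary route --- observing that $E^2 = E \cup \E$ consists of exactly the pairs at distance at most two, so the family of 2-packing sets of $G$ coincides with the family of independent sets of $G^2$ and hence the optima (and $\al(G)=\alpha(G^2)$) coincide --- is the standard argument and is sound, including the handling of the two cases $(u,w)\in E$ versus $(u,w)\in\E$ and the fact that $E\cap\E=\emptyset$.
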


\subsection{MIS Solver} \label{sec:misSolver}
We have chosen to use the solver by Lamm~\etal\cite{lamm-2019} since it is a state-of-the-art exact solver for the independent set problem. However, it is also possible to integrate any other exact solver for the maximum independent set problem. Note that we did not chose the branch and reduce solver for the unweighted problem \cite{redumis-2017}, since with it we are restricted to smaller graphs. While our focus is on optimal solution, we also combine our solver with \textsc{OnlineMIS}~\cite{DBLP:conf/wea/DahlumLS0SW16}, resulting in our heuristic {\hpack}.
For a briefly description of the main components of the solver by Lamm~\etal\cite{lamm-2019} and the solver by Dahlum~\etal\cite{DBLP:conf/wea/DahlumLS0SW16} we refer to our technical report~\cite{DBLP:journals/corr/abs-2308-15515}.

\section{Experimental Evaluation} \label{sec:experiments}

\textbf{Methodology.}
We implemented our algorithm using C++17. The code is compiled using g++ version 12.2 and full optimizations turned on (-O3).
We used a machine equipped with a AMD EPYC 7702P (64 cores) processor and 1 TB RAM running Ubuntu 20.04.1. %
We ran all of our experiments with four different random seeds and report geometric mean values unless mentioned otherwise. We set the time limit for all algorithms to 10h. If a solver exceeded a memory threshold of \numprint{100} GB during execution we stop the solver and mark this with m.o.. If the algorithm terminated due to the time limit, we mark it with t.o.~in the results. In both cases we report the best solution found \hbox{until this point.}
To compare different algorithms, we use performance profiles~\cite{dolan2002benchmarking}, which depict the relationship between the objective function size or running time of each algorithm and the corresponding values produced or consumed by the algorithms. Specifically, the y-axis represents the fraction of instances where the objective function is less than or equal to $\tau$ times the best objective function value, denoted as $\#\{\textnormal{objective} \leq \tau \ast \textnormal{best}\}/\#G$. Here, ``objective`` refers to the result obtained by an algorithm on an instance, and ``best" corresponds to the best result among all the algorithms shown in the plot. \#$G$ is the number of graphs in the data set.
When considering the running time, the y-axis displays the fraction of instances where the time taken by an algorithm is less than or equal to $\tau$ times the time taken by the fastest algorithm on that instance, denoted as $\#\{\textnormal{t}\leq \tau \ast \textnormal{ fastest}\}/\#\textnormal{instances}$. Here, ``t`` represents the time taken by an algorithm on an instance, and ``fastest" refers to the time taken by the fastest algorithm on that specific instance.
The parameter $\tau$, which is greater than or equal to 1, is plotted on the x-axis. Each algorithm's performance profile yields a non-decreasing, \hbox{piecewise constant function.} %

\begin{table}[b]
 \centering
 	\caption{Effect of graph transformation: Arithmetic mean percentage of number of vertices $\tilde{n} = n(K^2)/n(G)$ and edges $\tilde{m}=m(K^2)/m(G)$ in transformed graph after different reduction variants. Since {\pack} does not apply reductions, this column represents the square graphs $G^2$. For detailed results, see Table~\ref{tab:reduction1}.}\label{tab:summary_graph_table_nm}
\begin{tabular}{lcccccc}
           \multicolumn{1}{c}{} & 
           \multicolumn{2}{c}{{\pack}}  &
           \multicolumn{2}{c}{{\textsc{core}}} &
           \multicolumn{2}{c}{{\textsc{elaborated}}}  \\
           \cmidrule(r){2-3} \cmidrule(r){4-5} \cmidrule{6-7} 
           \multicolumn{1}{l}{\thead[l]{}} &
           \multicolumn{1}{r}{$\tilde{n}$[\%]} & \multicolumn{1}{r}{$\tilde{m}$[\%]} &
           \multicolumn{1}{r}{$\tilde{n}$[\%]} & \multicolumn{1}{r}{$\tilde{m}$[\%]}  &
           \multicolumn{1}{r}{$\tilde{n}$[\%]} & \multicolumn{1}{r}{$\tilde{m}$[\%]}  \\
        \toprule
        \multicolumn{1}{l}{\textit{planar}} & \mc{1}{r}{\numprint{100}} & \mc{1}{r}{\numprint{212.11}} & \mc{1}{r}{{\numprint{99.91}}} & \mc{1}{r}{{\numprint{211.67}}} & \mc{1}{r}{{\numprint{99.91}}} & \mc{1}{r}{{\numprint{211.67}}}\\ 
        \multicolumn{1}{l}{\textit{social (s)}} & \mc{1}{r}{\numprint{100}} & \mc{1}{r}{\numprint{3154.00}} & \mc{1}{r}{{\numprint{14.48}}} & \mc{1}{r}{{\numprint{70.25}}} & \mc{1}{r}{{\numprint{14.48}}} & \mc{1}{r}{{\numprint{70.25}}}\\ 
        \multicolumn{1}{l}{\textit{social (l)}} & \mc{1}{r}{\numprint{100}} & \mc{1}{r}{\numprint{4327.38}} & \mc{1}{r}{{\numprint{17.12}}} & \mc{1}{r}{{\numprint{274.56}}} & \mc{1}{r}{{\numprint{17.12}}} & \mc{1}{r}{\numprint{274.58}}\\ 
        \midrule
        \multicolumn{1}{l}{overall} & \mc{1}{r}{\numprint{100}} & \mc{1}{r}{\numprint{2564.50}} & \mc{1}{r}{{\numprint{43.84}}} & \mc{1}{r}{{\numprint{185.49}}} & \mc{1}{r}{{\numprint{43.84}}} & \mc{1}{r}{\numprint{185.50}}\\ 
        \bottomrule
\end{tabular}
\end{table}

\textbf{Overview/Competing Algorithms.}
We perform a wide range of experiments. 
First we perform experiments to investigate the influence of the data reduction rules in Section~\ref{sec:experimentsReduction}. Therefore, we define three configurations for our reductions. The first variant is called {\pack} and does not include any of our proposed reductions. 
Then, in {\redpdc} we only use the Domination and Clique Reduction. For the last variant {\redpfull} we apply the full set of all special case reductions as well as the Domination and Clique Reduction. The order for reductions in {\redpdc} is [\ref{red:clique}, \ref{red:domination}] while {\redpfull} uses [\ref{red:deg0}, \ref{red:deg0triangle}, \ref{red:deg1}, \ref{red:triangle}, \ref{red:cycle}, \ref{red:v-shape}, \ref{red:twin}, \ref{red:fastDomination}, \ref{red:domination}, \ref{red:clique}]. Note that we did not experiment with different orderings for the reductions since Großmann~\etal show in~\cite{DBLP:conf/gecco/GrossmannL0S23,grossmann2022heuristic} that for the MWIS problem an intuitive ordering worked best and small changes do not effect the solution quality and running time by a lot.
We then compare our algorithms against the state-of-the-art for the problem in Section~\ref{sec:experimentsSOA}. 
In particular, we compare against the genetic algorithm \textsc{gen2pack} by Trejo-Sánchez~\etal\cite{trejo2020genetic} as well as the \textsc{Apx-2p + Imp2p} algorithm by Trejo-Sánchez \etal\cite{trejo2023approximation} which only works for planar graphs. We use two configurations of \textsc{Apx-2p + Imp2p}. The configurations differ in the parameter $h$ which specifies, the number of vertices in the subgraphs. With increasing $h$ the solution quality improves, but the slower the algorithm performs. We chose the default configuration with $h=50$ and $h=100$ to improve the solution and give a fairer comparison with our 10 hour time limit.
We could not perform experiments with \textsc{Maximum-2-Pack-Cactus} since the code is not available~\cite{personalCommunicationLamas} and the data in the paper itself is presented such that a direct comparison is not possible. 

\textbf{Data Sets.}
We collected a wide range of instances for our experiments different sources.
First, we use a set of social networks in our benchmark which are typically used to benchmark recent independent set algorithms. More precisely, we use forty social networks from~\cite{nr} and~\cite{DBLP:reference/snam/BaderKM00W18}. We split this class up into two sets of twenty graphs each by the number of vertices in the graphs, i.e.~we put graphs with less than \numprint{50000} vertices in \textit{small social} subset and the others in \textit{large social}. We only added instances for the benchmark were the number of edges did not exceed 32-bit when constructing the two-neighborhood with the {\pack} algorithm. Note that this only happened for large graphs and not for any of the instances that were used in previous experiments for the maximum 2-packing set~problem.
When comparing against the other algorithms, we also include 
 \numprint{20} cactus graphs from~\cite{flores2018algorithm} as well as a random selection of \numprint{1050}~Erdös–Rényi graphs from~\cite{trejo2020genetic}. %
To be able to compare against \textsc{Apx-2p + Imp2p}, we also include planar~graphs~from~\cite{trejo2023approximation}. For an overview of the graph properties we refer to~\cite{DBLP:journals/corr/abs-2308-15515}.\newline

\vspace*{-.75cm}
\subsection{Impact of Data Reductions}\label{sec:experimentsReduction}
We now investigate the effectiveness of the data reductions. To do so, we use the three reduction configurations for the algorithm {\epack}. To evaluate the effectiveness we do not investigate the influence on \textit{Erdös–Rényi} as well as \textit{Cactus} graphs, as they are already very small. 
We compare the impact on the size of the square kernels as well as solution quality and running time. Details are presented in Tables~\ref{tab:reduction1} and~\ref{tab:reduction2}.  
We summarize these results in Table~\ref{tab:summary_graph_table_nm} and Figure~\ref{fig:performance_variants}.

First, we look at the effectiveness of our reductions on the size of the square kernels $K^2$.
When applying the graph transformation on the original input, i.e.~without applying any reduction ({\pack}), the resulting instance has the same amount of vertices and on average \numprint{25.65}$\cdot m(G)$ edges, whereas {\redpdc} and {\redpfull} both yield on average $0.44\cdot|V|$ vertices and $1.86\cdot|E|$ edges. Thus, our data reductions help to decrease the size of the transformed graph by more than a factor of ten on average.
The approaches {\redpdc} and {\redpfull} compute overall the same kernel sizes.
However, on five large social instances {\redpfull} computes smaller kernels, but only with a very small difference. On planar graphs our reduction rules, as expected, are not working good since they have mesh-like structure. The number of vertices in the kernel is only reduced by \numprint{0.1}\% compared to the original graph. All together we are able to reduce 15 out of 60 instances to an empty kernel and thereby solve them solely by our data-reductions. 
Hence, we conclude that the reductions are highly effective in reducing the graph size and especially reduce the size for social networks. 

\begin{figure}[t]
	\centering
		\caption{Solution quality (left) and running time (right) for {\epack} on \textit{planar} and \textit{social} graphs comparing our different reduction variants.  \label{fig:performance_variants}}
        \vspace*{-.5cm}
	\includegraphics[width=.48\textwidth]{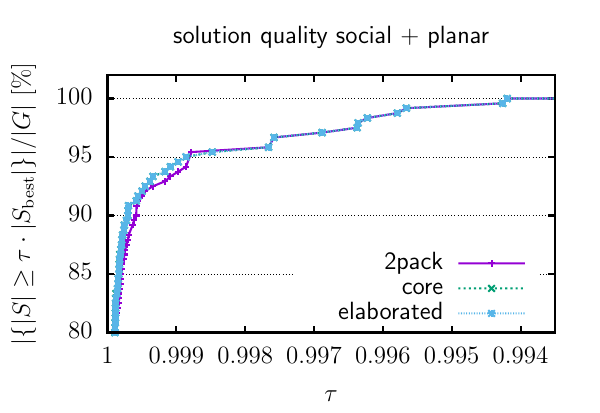}
	\includegraphics[width=.48\textwidth]{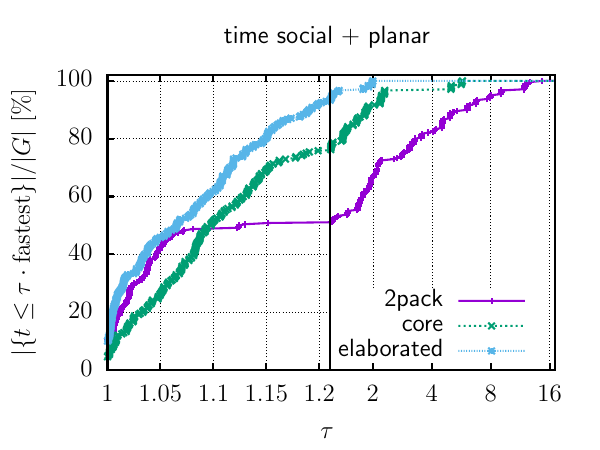}%
        \vspace*{-.5cm}
\end{figure}

Regarding solution quality overall our variant {\redpfull} is performing only slightly better compared to the other variants. Especially on the \textit{small social} and \textit{planar} graphs hardly a difference can be seen. When considering \textit{large social} graphs, however, we can find no instance, on which {\pack} outperforms {\redpdc} or {\redpfull}. Overall,~we achieve an improvement through {\redpfull} on this graph class of \numprint{0.05}\% compared to {\pack} and {\redpdc}. The instance with the largest difference in solution quality is \textit{road\_usa}. Here, {\redpfull} achieves an improvement of \numprint{0.93}\% over the other two strategies. On the instance \textit{amazon-2008} {\redpfull} performs worse compared to {\redpdc}. On this, the solution quality of {\redpdc} is improved by \numprint{0.10}\% compared to the solution of {\redpfull}. For all of the 6 instances, on which {\redpfull} was outperformed by {\pack} the improvement over {\redpfull} is always smaller than \numprint{0.01}\%.

Figure~\ref{fig:performance_variants} also show that using our different data reduction rules as a preprocessing step ({\redpdc} and {\redpfull}) especially improves the running time compared to {\pack}. Here, we see that in general our reductions are improving the performance and our approach {\redpfull} works best. In the detailed results in Table~\ref{tab:reduction2} we see that especially for \textit{large social} graphs {\redpfull} yields a speed up of \numprint{2.7} compared to {\pack}. On \textit{planar} graphs, where our reductions are not effective in reducing the initial input size, the performance is very similar \hbox{for all our variants.}

\textit{Conclusion:} We conclude that for all examined criteria, which are the size of the square kernel as well as solution quality and running time, our variant {\redpfull} performs best. Hence, we choose the {\redpfull} reduction variant in the following state-of-the-art comparisons for both {\epack} and {\hpack}.

\subsection{Comparison against State-of-the-Art}\label{sec:experimentsSOA}
We now compare our algorithms {\epack} and {\hpack} using our best reduction variant {\redpfull}, against other state-of-the-art algorithms which are \textsc{gen2pack} by Trejo-Sánchez~\etal\cite{trejo2020genetic} as well as two configurations of \textsc{Apx-2p + Imp2p} by Trejo-Sánchez~\etal\cite{trejo2023approximation}.

\textit{\textsc{gen2pack}:} 
For the comparison against \textsc{gen2pack} we only use \textit{cactus} graphs and Erdös–Rényi (\textit{erdos}) networks, \ie the instances used in their paper, as \textsc{gen2pack} is not able to solve any of the other, larger graphs within the given time limit. 
This can be explained by the initial computations containing matrix multiplication used in \textsc{gen2pack}. This does not finish during the 10 hours limit, so that the algorithm could not compute any solution at all.
 Detailed per instance results for this comparison can be found in~Table~\ref{tab:soa_small}. 
\begin{table*}
	\centering
	\caption{Summary comparison of state-of-the-art. Detailed results are presented in Tables~\ref{tab:soa_small} to \ref{tab:social_heuristic}. Geometric mean over different graph classes of solution size $|S|$ and time $t$ to find it. \textbf{Best} results are emphasized in bold. \textsc{Apx-2P + Im2p} with $h=100$ is omitted as not all planar instances where solved within the time limit.}\label{tab:soa_overal}		
	\hspace*{-.5cm}	\resizebox{.9\linewidth}{!}{\begin{minipage}{1.03\linewidth}
	\begin{tabular}{lcccccccc} 
\mc{3}{l}{} & \mc{2}{c}{\textsc{Apx-2p + Im2p}}   & \mc{2}{c}{{\redp}}  & \mc{2}{c}{{\redp}} \\
\mc{1}{l}{} & \mc{2}{c}{\textsc{gen2pack}} & \mc{2}{c}{\textsc{($h=50$)}}  & \mc{2}{c}{\textsc{b\&r}}  & \mc{2}{c}{ \textsc{heuristic}} \\ 
\cmidrule(r){2-3}\cmidrule(r){4-5}\cmidrule(r){6-7}\cmidrule(r){8-9}
\mc{1}{l}{\thead[l]{Class}} &  \mc{1}{c}{\thead[c]{$|S|$}} & \mc{1}{c}{\thead[c]{$t$ [s]}} & \mc{1}{c}{\thead[c]{$|S|$}} & \mc{1}{c}{\thead[c]{$t$ [s]}} &  \mc{1}{c}{\thead[c]{$|S|$}} & \mc{1}{c}{\thead[c]{$t$ [s]}} & \mc{1}{c}{\thead[c]{$|S|$}} & \mc{1}{c}{\thead[c]{$t$ [s]}}\\ 
\toprule
\mc{1}{l}{\textit{cactus}} & \mc{1}{r}{\numprint{104}} & \mc{1}{r}{\numprint{1384007.11}} &  \mc{2}{c}{-} & \mc{1}{r}{\textbf{\numprint{137}}} & \mc{1}{r}{\textbf{\numprint{4.26}}} & \mc{1}{r}{\textbf{\numprint{137}}} & \mc{1}{r}{\numprint{7.30}}\\
\mc{1}{l}{\textit{erdos}} & \mc{1}{r}{\numprint{8}} & \mc{1}{r}{\numprint{21679.67}} &   \mc{2}{c}{-}& \mc{1}{r}{\textbf{\numprint{9}}} & \mc{1}{r}{\textbf{\numprint{0.31}}} & \mc{1}{r}{\textbf{\numprint{9}}} & \mc{1}{r}{\numprint{0.53}}\\  
\mc{1}{l}{\textit{planar}}  &  \mc{2}{c}{-} & \mc{1}{r}{\numprint{110009}} & \mc{1}{r}{\numprint{255.04}} & \mc{1}{r}{\numprint{92135}} & \mc{1}{r}{\numprint{10.41}} & \mc{1}{r}{\textbf{\numprint{110095}}} & \mc{1}{r}{\textbf{\numprint{31706.65}}}\\ 
\mc{1}{l}{\textit{social (s)}} &  \mc{2}{c}{-}&  \mc{2}{c}{-}& \mc{1}{r}{\textbf{\numprint{159}}} & \mc{1}{r}{\textbf{\numprint{11.32}}} &  \mc{1}{r}{\textbf{\numprint{159}}} & \mc{1}{r}{\numprint{13.53}}\\ 
\mc{1}{l}{\textit{social (l)}} &  \mc{2}{c}{-}& \mc{2}{c}{-}&  \mc{1}{r}{\numprint{30066}} & \mc{1}{r}{\numprint{6442.49}} &  \mc{1}{r}{\textbf{\numprint{30756}}} & \mc{1}{r}{\textbf{\numprint{26377.56}}}
\end{tabular} 

	\end{minipage}}
        \vspace*{-.75cm}
\end{table*} 

In Figure~\ref{fig:performance_soa} we give performance profiles for running time and solution quality. In Table~\ref{tab:soa_overal} we give the geometric mean running times and solution qualities for these results. Our algorithm {\epack} as well as {\hpack} \emph{find overall the optimal solution in the classes \textit{cactus} and \textit{erdos} within a few milliseconds}. 
Our algorithms dominate \textsc{gen2pack} in terms of both solution quality as well as running time. Especially the differences in running time are very large. On all graphs our two algorithms are multiple orders of magnitude faster than \textsc{gen2pack}. It can only find optimal solutions for \numprint{6} out of these~\numprint{40}~graphs, see Table~\ref{tab:soa_small}.
On these two graph classes \textit{both} our algorithms always compute the optimum solution quality which results in an average solution quality improvement of more than 20\% and a speedup of more than $10^5$. Among all instances under consideration, on \textit{Erdos37-2} \textsc{gen2pack} needed the least amount of time to compute an optimum solution. For this instance we achieve with {\epack} a speed up of more than \numprint{300000} and more than \numprint{350000} with {\hpack}. The instance on which \textsc{gen2pack} needs the most time is \textit{cac1000}. On this instance, {\epack} and {\hpack} again have similar speedups in the range of $10^5$ over \textsc{gen2pack} and an improvement in solution quality of roughly 32\%. 
Considering the overall data set, our approach {\epack} can solve 63 out of 100 graphs to optimality within less than one second and 71 within the 10 hour time limit and 100GB restriction. For instances that we could not solve to optimality due to experimental restrictions, we give the solution found until this point, see Tables~\ref{tab:reduction2}~and~\ref{tab:soa_small}. 

In Table~\ref{tab:social_heuristic} we compare {\epack} and {\hpack} on social graphs, which are not solvable with the competitors. On these instances we are able to achieve an average improvement in solution quality of around 1\% with {\hpack} compared to {\epack}. Especially for large graphs, where our exact solver meets the memory threshold, our heuristic variant is able to outperform the results of {\epack}. %

\textit{\textsc{Apx-2p + Imp2p}}:  
Since our reductions do not perform well on planar graphs, we are not able to solve them to optimality with {\epack} and exceed the memory threshold quite fast. 
 Overall, the solution quality we achieve with {\epack} for the planar graphs is \numprint{84}\% of the solution quality that the competitor \textsc{Apx-2p + Imp2p} ($h=50$) computes, but we only use \numprint{4}\% of its time needed. 
 With {\hpack}, on the other hand, we outperform \textsc{Apx-2p + Imp2p} ($h=50$) on all but one instance regarding solution quality, see Table~\ref{tab:soa_small}. We achieve an average solution quality which is on par, i.e.~our improvement over the competitors results by \numprint{0.08}\%. Note that the authors experimentally show in~\cite{trejo2023approximation} that the computed 2-packing set by \textsc{Apx-2p + Imp2p} is already at least \numprint{99}\% of the optimum solution. %
 However, on those instances our algorithms needs roughly two orders of magnitude more running time than \textsc{Apx-2p + Imp2p} ($h=50$). 
 \textsc{Apx-2p + Imp2p} with $h=100$ compared to $h=50$ can improve all but one solution, however the running time increase is up to multiple orders of magnitude and some instances where not solved within the 10 hour time limit. {\hpack} is able to find better solutions than \textsc{Apx-2p + Imp2p} ($h=100$) on 9 out of 20 instances. Again the differences in solution quality are very small. 
 Detailed results for these experiments are given in Table~\ref{tab:soa_small} and Figure~\ref{fig:performance_soa} (right). The competitor \textsc{gen2pack} for general graphs is unable to solve any of these instances which is why we omitted it in the corresponding tables and performance profiles.
\begin{figure*}[t]
	\centering
	\caption{State of the art comparison on solution quality (top) and running time (bottom) for different graph classes. For planar graphs (right) the competitor \textsc{gen2pack} is not able to solve the instances. To this end we add the algorithm \textsc{Apx-2p + Imp2p} with $h=50$ and $h=100$ as well as {\hpack} for comparison.} \label{fig:performance_soa}
	\includegraphics[width=.325\textwidth]{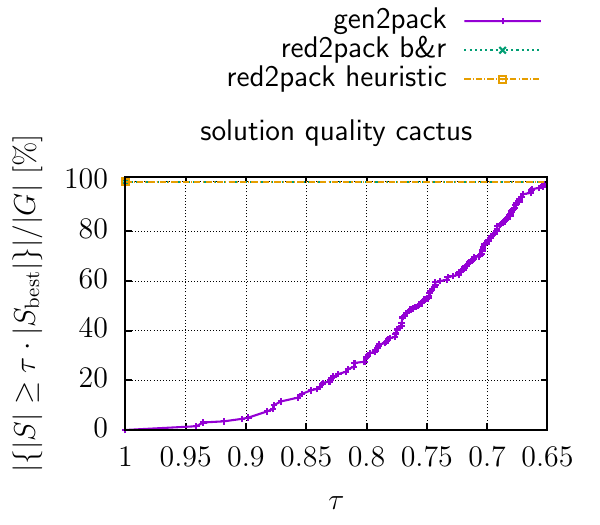}
	\includegraphics[width=.325\textwidth]{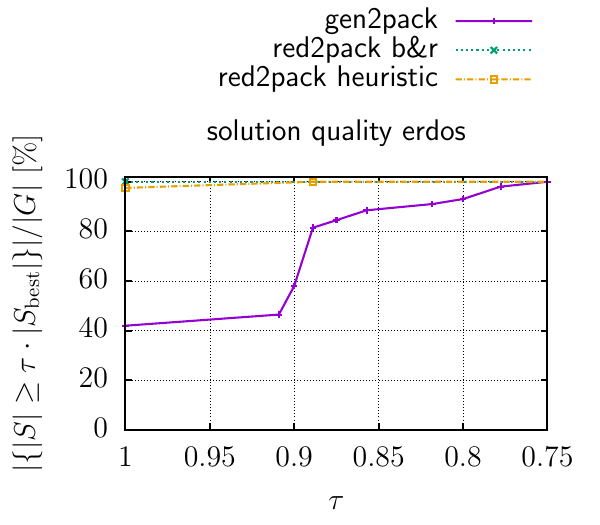}
	\includegraphics[width=.325\textwidth]{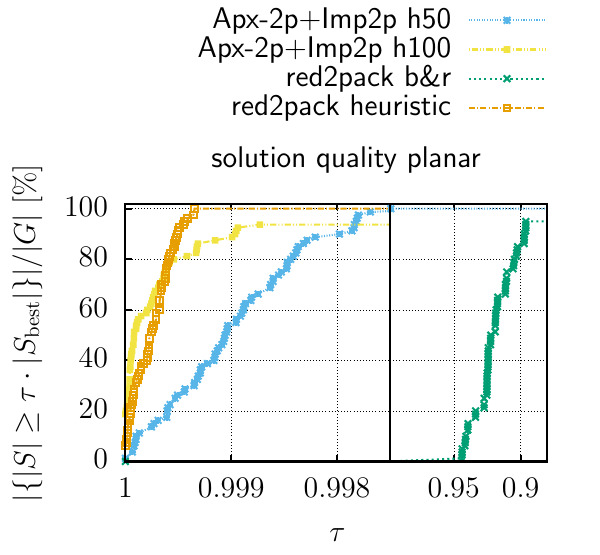}
\vspace*{-.2cm}
	\includegraphics[width=.325\textwidth]{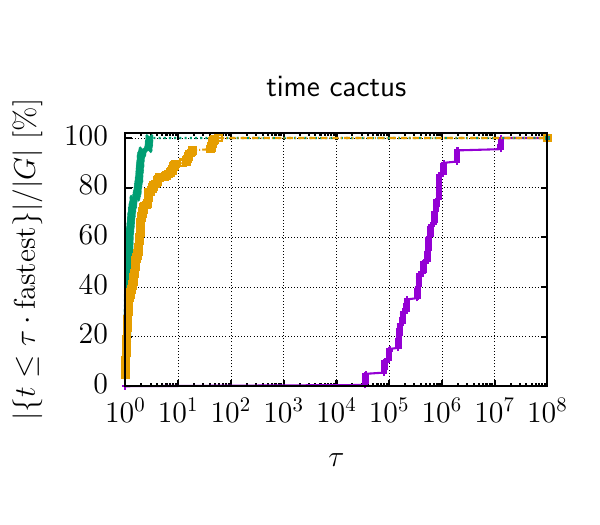}
	\includegraphics[width=.325\textwidth]{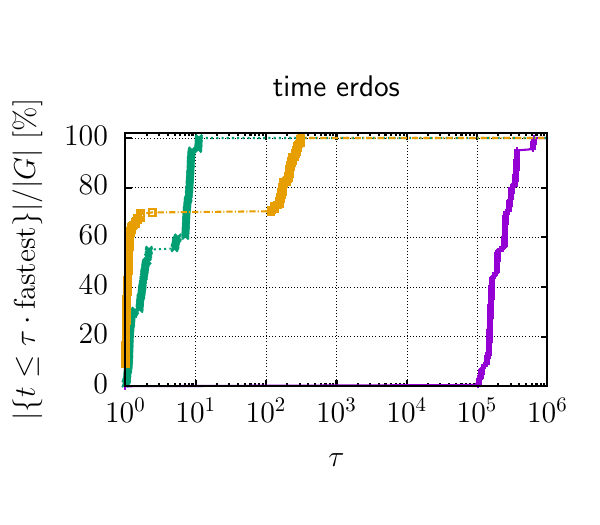}
	\includegraphics[width=.325\textwidth]{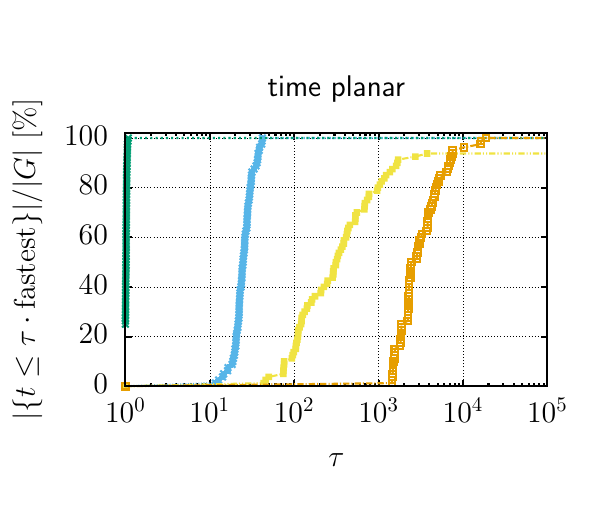}
        \vspace*{-.5cm}
\end{figure*}

\textit{Conclusion:} We conclude that on all instances we are able to outperform the competitor \textsc{gen2pack} for arbitrary graphs in both solution quality and running time by multiple orders of magnitude. Moreover, we can solve a wide range of instances to optimality that previously have been unsolvable.
When comparing against algorithms specialized on planar graphs, we presented two options: one that is by more than a factor of \numprint{24} times faster with lower solution quality and one that is on par in terms of solution quality, but slower, compared to both configurations of the state-of-the-art \emph{specialized} solver for planar graphs.

\section{Conclusion and Future Work} \label{sec:conclusion}
This work introduces novel data reduction rules to solve the maximum 2-packing set problem as well as proposes a new exact algorithm {\epack} that uses these reductions to exactly solve the maximum 2-packing set problem on large-scale arbitrary graphs. Additionally, a new heuristic algorithm {\hpack} is introduced that works similar.
Both of the algorithms {\epack} and {\hpack} work in three phases. First the new data reduction rules are applied to the given input
resulting in a reduced instance. Following the reduction phase, the resulting graph is transformed,
such that a solution on the transformed graph for the maximum independent set problem corresponds to a solution of
the maximum 2-packing set problem for the original graph. The third phase of the algorithms consists of solving the maximum independent set problem on the transformed graph.
Our tests indicate that our algorithms outperform the previous best algorithm for arbitrary graphs both in terms of solution quality and running time on all instances. For instance, we can compute optimal solutions for 63\% of our graphs in under a second, whereas the competing method for arbitrary graphs achieves this only for 5\% of the graphs even with a 10-hour time frame. Furthermore, our method successfully solves many large instances that remained unsolved before. Lastly, our algorithm can compete with a specialized solver on planar instances in terms of solution size and computes near optimum solutions. 
Our code is publicly available under \url{https://github.com/KarlsruheMIS/red2pack}.

In future work, we want to find more reduction rules, especially for mesh-like graphs. We are also interested in the weighted 2-packing set problem as well as the $k$-packing set problem for larger values~of~$k$ and find independent motifs in graphs via hypergraphs \hbox{matching algorithms.}
\vfill \pagebreak

\bibliographystyle{plain}
\bibliography{phdthesiscs, mwis, red2pack, all_clean}
\vfill \pagebreak

\begin{appendix}

\section{Reduction and Transformation Details}

	\begin{table}[H]
				\caption{Percentage of remaining vertices and edges for different reduction variants after transformation. Bolt numbers indicate the \textbf{best} results, gray background marks \colorbox{lightgray}{empty kernels}.}\label{tab:reduction1}
		\resizebox{\linewidth}{!}{\begin{minipage}{1.55\linewidth}		
				\centering
				\vspace*{-.15cm}
 

	\end{minipage}}
\end{table}

\end{appendix}

\end{document}